\newcommand{\third}{\frac{1}{3}}
\newcommand{\twothirds}{\frac{2}{3}}
\newcommand{\fourthirds}{\frac{4}{3}}
\newtheorem{theorem}{Theorem}%
\newaliascnt{lemma}{theorem}
\newtheorem{lemma}[lemma]{Lemma}%
\newaliascnt{claim}{theorem}
\newaliascnt{corollary}{theorem}
\newaliascnt{proposition}{theorem}
\newaliascnt{remark}{theorem}
\newaliascnt{result}{theorem}
\newtheorem{result}[result]{Result}
\newaliascnt{algo}{procedure}
\theoremstyle{definition}
\newtheorem{definition}{Definition}
\newtheorem{example}{Example}
\newcommand{\AutoAdjust}[3]{\mathchoice{ \left #1 #2  \right #3}{#1 #2 #3}{#1 #2 #3}{#1 #2 #3} }
\newcommand{\Xcomment}[1]{{}}
\newcommand{\InBrackets}[1]{\AutoAdjust{[}{#1}{]}}
\newcommand{\Ex}[2][]{\operatorname{\mathbf E}_{#1}\InBrackets{#2}}
\newcommand{\Exlong}[2][]{\operatornamewithlimits{\mathbf E}\limits_{#1}\InBrackets{#2}}
\newcommand{\vect}[1]{\ensuremath{\mathbf{#1}}}
\newcommand{\R}{\mathbb{R}}
\newcommand{\nbidder}{n}
\newcommand{\nitem}{m}
\newcommand{\util}{u}
\newcommand{\bundle}{S}
\newcommand{\win}{W}
\newcommand{\wini}[1][i]{\win_{#1}}
\newcommand{\opt}{\mathrm{OPT}}
\newcommand{\val}{v}
\newcommand{\vali}[1][i]{\val_{#1}}
\newcommand{\vals}{\vect{\val}}
\newcommand{\valsmi}[1][i]{\vals_{-#1}}
\newcommand{\hval}{\widehat{v}}
\newcommand{\hvali}[1][i]{\hval_{#1}}
\newcommand{\price}{p}
\newcommand{\prices}{\vect{\price}}
\newcommand{\pricei}[1][i]{{\price_{#1}}}
\newcommand{\vpricei}[1][i]{{\vec {\price}_{#1}}}
\newcommand{\pricew}{\widetilde{p}}
\newcommand{\vprice}{\vec{\price}}
\newcommand{\qprice}{q}
\newcommand{\qpricew}{\widetilde{q}}
\newcommand{\bid}{b}
\newcommand{\bidi}[1][i]{{\bid_{#1}}}
\newcommand{\bids}{\vect{\bid}}
\newcommand{\bidsmi}[1][i]{{\bids_{-#1}}}
\newcommand{\vbidi}[1][i]{{\vec{\bid}_{#1}}}
\newcommand{\wbid}{\widetilde{b}}
\newcommand{\wbidi}[1][i]{{\wbid_{#1}}}
\newcommand{\projbid}{\varphi}
\newcommand{\alloc}{X}
\newcommand{\alloci}[1][i]{{\alloc_{#1}}}
\newcommand{\allocs}{\vect{X}}
\newcommand{\strat}{s}
\newcommand{\strati}[1][i]{{\strat_{#1}}}
\newcommand{\strats}{\vect{\strat}}
\newcommand{\stratsmi}[1][i]{{\strats_{-#1}}}
\newcommand{\wstrat}{\widetilde{s}}
\newcommand{\wstrati}[1][i]{{\wstrat_{#1}}}
\newcommand{\dist}{\mathcal{D}}
\newcommand{\distw}{\widetilde{\mathcal{D}}}
\newcommand{\dists}{\vect{\dist}}
\newcommand{\valdist}{\mathcal{F}}
\newcommand{\valdists}{\vect{\valdist}}
\begin{document}

\title{Simultaneous Auctions are (almost) Efficient}

\author{
Michal Feldman\thanks{Hebrew University of Jerusalem, and Harvard University, \texttt{michal.feldman@huji.ac.il}} 
\and
Hu Fu\thanks{Dept.\@ of Computer Science, Cornell University, \texttt{hufu@cs.cornell.edu}}
\and
Nick Gravin\thanks{Nanyang Technological University, \texttt{ngravin@pmail.ntu.edu.sg}} 
\and
Brendan Lucier\thanks{Microsoft Research New England, \texttt{brlucier@microsoft.com}
\newline Work was done while first three authors were visiting Microsoft Research New England.}
}

\maketitle

\begin{abstract}
Simultaneous item auctions are simple procedures 
for allocating items to bidders with 
potentially complex preferences over different item sets. 
In a simultaneous auction, every bidder submits bids on all items simultaneously. 
The allocation and prices are then resolved for each item separately, based solely on the bids submitted on that item.
Such procedures occur in practice (e.g.\ eBay) but are not truthful.
We study the efficiency of Bayesian Nash equilibrium (BNE) outcomes of simultaneous first- and second-price auctions when bidders have 
complement-free (a.k.a. subadditive) valuations. We show that the expected social welfare of any BNE is at least $\tfrac 1 2$ of the optimal social welfare 
in the case of first-price auctions, and at least $\tfrac 1 4$ in the case of second-price auctions.
These results improve upon the previously-known logarithmic bounds, which were established by \citet{HKMN11} for first-price auctions and by \citet{BR11} for second-price auctions.

\end{abstract}

\section{Introduction}
\label{sec:introduction}

The central problem in algorithmic mechanism design is to determine how best to allocate resources among individuals, while respecting both computational constraints and the individual incentives of the participants.  Much of the theoretical work in this field to date has focused on solving such problems \emph{truthfully}.
In a truthful mechanism, the participants reveal their preferences in full to a central orchestrator, who then
distributes the resources in a way
that incentivizes
truthful revelation.
Such an approach has theoretical appeal, but truthful mechanisms tend to be complex and are rarely used in practice.
Instead, it is common to forego truthfulness and use simpler mechanisms.
Canonical examples of such auctions are the generalized second price (GSP) auctions for online advertising \citep{Edelman05,Varian07}, and the ascending price auction for electromagnetic spectrum allocation \citep{Milgrom98}.
Given that such simple auctions are used in practice, it is of crucial importance to determine how they actually perform when used by rational (and strategic) agents.

Consider the problem of resolving a \emph{combinatorial auction}.  In such a problem there is a large set $M$ of $m$ objects for sale, and $n$ potential buyers.  Each buyer has a private value function $\vali \colon 2^M \to \mathbb{R}_{\geq 0}$ mapping sets of objects to their associated values.  The goal of the market designer is to decide how to allocate the objects among the buyers to maximize the overall social efficiency.  One approach would be to elicit the valuation function from each bidder, then attempt to solve the resulting optimization problem.  However, in existing online marketplaces such as eBay, buyers do not express their (potentially complex) preferences directly; rather, each item is auctioned independently, and a buyer is forced to bid separately on individual items.  This approach is simple and natural, and relieves the burden of expressing a potentially complex valuation function.  On the other hand, this limited expressiveness could potentially lead to inefficient outcomes. This begs the question: how well does the outcome of simultaneous item auctions approximate the socially optimal allocation?

In order to evaluate the performance of non-truthful mechanisms, we take the economic viewpoint that self-interested agents will apply bidding strategies at equilibrium, so that no agent can unilaterally improve his outcome by changing his strategy.  We apply a quantitative approach, and ask how well the performance at equilibrium approximates the socially optimal outcome.  Since there may potentially be multiple equilibria, we will bound the performance in the worst case over equilibria.  Put another way, our approach is to use the {\em price of anarchy} as a performance measure for the analysis of mechanisms.

The fact that equilibria of simultaneous auctions might not be socially optimal was first observed by \citet{Bikhchandani99}, who studied the complete information\footnote{In a complete (or full) information setting, it is assumed that the bidders' valuations are commonly known to all participants} setting.  As he states:

\begin{quote}
``Simultaneous sealed bid auctions are likely to be inefficient under complete information
and hence, also under the more realistic assumption of incomplete information
about buyer reservation values."
\end{quote}

Our goal is to bound the extent of this inefficiency in the incomplete information setting.  To this end,
we model incomplete information using the standard Bayesian framework.
In this model, the buyers' valuations are assumed to be drawn independently from (not necessarily identical) distributions.  This product distribution is commonly known to all of the participants; we think of this as representing the public's aggregate beliefs about the buyers in the market.  While the distributions are common knowledge, each agent's true valuation is private.  This Bayesian model generalizes the full-information model of Nash equilibrium, which implicitly supposes that the type profile is known by all participants.  Note that while the agents are aware of the type distribution, the mechanism (which applies simultaneous item auctions) is prior-free and hence agnostic to this information.

\paragraph{Pricing and Efficiency in Simultaneous Auctions.}
We consider separately the case in which items are sold via first-price auctions (in which the player who bids highest wins and pays his bid), and the case of second-price auctions\footnote{Second-price item auctions are also known as Vickrey auctions; we will use these terms interchangeably.} (in which the winning bidder pays the second-highest bid).  The differences between first and second-price item auctions have received significant attention in the recent literature.  For example, a pure Nash equilibrium of our mechanism with simultaneous first-price auctions is equivalent to a Walrasian equilibrium \citep{Bikhchandani99,HKMN11}, and therefore must obtain the optimal social welfare \citep{BM97}.  On the other hand, every pure Nash equilibrium for second-price auctions is equivalent to a Conditional equilibrium, and hence obtains at least half of the optimal social welfare \citep{FKL12}.  While these constant factor bounds are appealing, their power is marred by the fact that pure equilibria do not exist in general.

Can we hope for such constant-factor bounds to hold for general Bayes-Nash equilibria?  For general valuations the answer is no.  Consider, for example, the case of a buyer who has a very large value for the set of all objects for sale, but no value for any strict subset.  In this case, any positive bid carries great risk: the buyer might win some items but not others, leaving him with negative utility.  It therefore seems that complements do not synergize well with item bidding, and indeed it has been shown by \citet{HKMN11} that the price of anarchy (with respect to mixed equilibria) in a first-price auction can be as high as $\Omega(\sqrt{m})$ when bidders' valuations exhibit complementarities.
The same lower bound can be easily extended to the case of second-price auctions\footnote{As explained in the sequel, to obtain meaningful results in second-price auctions one needs to impose {\em no-overbidding} assumptions on the bidding strategies, defined formally in Section \ref{sec:nob}. The $\Omega(\sqrt{m})$ lower bound extends to the case of second-price auctions under the {\em weak no-overbidding} assumption. The alternative \emph{strong no-overbidding} assumption is meaningless in the case of complements, as it precludes item bidding altogether.}.

Our main result is that \emph{the presence of complements is the only barrier to a constant price of anarchy}.  We show that when buyer valuations are complement-free (a.k.a.\ subadditive), the (Bayesian) price of anarchy of the simultaneous item auction mechanism is at most a constant, in both the first- and second-price auctions.

For first-price auctions, we show that any Bayes-Nash equilibrium yields at least half of the optimal social welfare.  This improves upon the previously best-known bound of $O(\log n)$ due to \citet{HKMN11}, where $n$ is the number of bidders.

\begin{result} [\bf BPoA$\leq 2$ in simultaneous first-price auctions.]
When buyers have subadditive valuations, the Bayesian price of anarchy of the simultaneous first-price item auction mechanism is at most $2$.
\end{result}

For simultaneous Vickrey auctions, it is not possible to bound the worst-case performance at equilibrium, even when there is only a single object for sale.  This impossibility is due to arguably unnatural equilibria in which certain players grossly overreport their values, prompting others to bid nothing.  To circumvent this issue one must impose an assumption that agents avoid such ``overbidding'' strategies.  In the \emph{strong no-overbidding assumption}, used by \citet{CKS08} and \citet{BR11}, it is assumed that each agent $i$ chooses bids so that, for every set of objects $S$, the sum of the bids on $S$ is at most $\vali(S)$.  
We show that under this assumption, the Bayesian price of anarchy for simultaneous Vickrey auctions is at most $4$.

\begin{result} [\bf BPoA$\leq 4$ in simultaneous second-price auctions.]
When buyers have subadditive valuations, the Bayesian price of anarchy of the simultaneous Vickrey auction mechanism is at most $4$, under the strong no-overbidding assumption.
\end{result}

The strong no-overbidding assumption is quite strong, as it must hold for {\em every} set of items.
A somewhat weaker assumption, referred to as {\em weak no-overbidding}, requires the the no overbidding condition holds only in expectation over the distribution of sets won by a player at equilibrium.
That is, agents are said to be \emph{weakly no-overbidding} if they apply strategies such that expected value of each agent's winnings is at least the expected sum of his winning bids \citep{FKL12}.
Roughly speaking, weak no-overbidding supposes that agents are generally averse to winning sets with bids that are higher than their true values.  However, unlike strong no-overbidding, it does not preclude strategies in which an agent overbids on sets that he does not expect to win, i.e.\ in order to more accurately express his willingness to pay for other sets.
For an expanded discussion of the no-overbidding assumptions, see \autoref{sec:no-overbidding-discussion}.

Notably, the BNE outcomes under the two no-overbidding assumptions are incomparable;
while the weak assumption is more permissive, and thus enables a richer set of behaviors in equilibrium,
it also introduces new ways to deviate from the prescribed equilibrium.
We show that the bound of $4$ on the Bayesian PoA extends also to the case of weakly no-overbidding agents.

\citet{BR11} showed that, under the strong no-overbidding assumption, the Bayeisan price of anarchy of the simultaneous Vickrey auction is strictly greater than $2$, and furthermore the price of anarchy is $\Omega(n^{1/4})$ when agent values are allowed to be correlated.  We show that similar results hold also under the weak no-overbidding assumption, proving bounds strictly greater than $2$ and $\Omega(n^{1/6})$, respectively.

Our bounds hold for subadditive bidders, whereas constant bounds on Bayesian price of anarchy were previously known only for the subclass of fractionally subadditive (i.e.\@ XOS) valuations \citep{CKS08}. Subadditive valuations are more expressive than their XOS counterparts, and obtaining price of anarchy bounds for subadditive valuations is significantly more challenging.  In particular, for XOS valuations, a player who aims to win certain set $S$ has a natural choice of bid: the additive valuation that determines his value for set $S$.  For subadditive valuations, there is no such notion of a natural bid aimed at representing one's value for a particular set, and hence even determining how best to bid on a certain set of interest is a non-trivial task.


%




\paragraph{Related Works}
Combinatorial auctions is a canonical subject of study in algorithmic mechanism design (see \citealp{NRTV07} and references therein for the large body of literature on this subject).
While most previous work focuses on the design of truthful mechanisms, we follow the more recent literature on the analysis of simple and practical (albeit not truthful) auctions.
Following the rich literature on the {\em price of anarchy} (PoA) \citep[see, e.g.,\@][for references]{RT07}, \citet{CKS08} pioneered the study of the \emph{Bayesian price of anarchy} (BPoA) and applied it to item-bidding auctions.
They bounded the BPoA by~$2$ in simultaneous second-price auctions with XOS valuations, which are equivalent to fractionally subadditive functions \citep{Feige09}.
The same bound was extended to the more general class of subadditive valuations by \citet{BR11}, and later to general valuations by \citet{FKL12}, albeit only with respect to \emph{pure} equilibria (when they exist).
The pure PoA was studies also in simultaneous first-price auctions by \citet{HKMN11}, who
showed a pure PoA of~$1$ for general valuations \footnote{Pure Nash equilibria rarely exist in
this case though, as they are shown to be equivalent to Walrasian equilibria of the corresponding two-sided market.}.

For both first- and second-price simultaneous auctions, the BPoA for subadditive valuations was not previously known to be better than $O(\log \nbidder)$. Previous techniques applied the known bounds for XOS valuations, using the $O(\log \nbidder)$ separation between XOS and subadditive valuations \citep[see e.g.\@][]{BR11}.

Studies on PoA and BPoA have provided insights into other settings, e.g.\ auctions employing greedy algorithms \citep{LB10}, Generalized Second Price Auctions \citep{PT10,LP11,CKKK11}, and also game-theoretic settings that are not related to auctions, such as network formation games \citep{Alon10}.

The {\em smoothness} technique for Bayesian games, developed by \citet{R12} and \citet{S12}, provides a method for extending bounds on pure PoA to Bayesian PoA.  However, to the best of our knowledge, our approach does not fall within this framework.  Roughly speaking, the smoothness framework requires that each player can find a good ``default'' strategy given his type, which is independent of the opponents' strategy selections.  However, subadditive valuations do not seem to admit such bids\footnote{We note that one can apply the technique on XOS
valuations, but because of the $O(\log \nbidder)$ separation between XOS and subadditive valuations \citep[see e.g.\@][]{BR11}, this gives only a logarithmic bound.}, and indeed the strategies we consider in our analysis depend heavily on the distribution of strategies applied by all players at equilibrium.



\paragraph{Organization of the paper.}
We introduce the necessary background and notation in \autoref{sec:prelim}. Our analysis then proceeds in two parts.  In the first part, \autoref{sec:subadditive}, we consider a single-player game in which the player, a subadditive buyer, must determine how best to bid on a set of objects against a distribution over price vectors.  We show that, for every distribution for which the expected sum of prices is not too large, the buyer has a bidding strategy that guarantees a high expected utility (compared to the player's value for the set of all objects).

In the second part of our analysis for the first-price (\autoref{sec:poa-fpa}) and Vickrey (\autoref{sec:poa-spa}) auctions, we show that every Bayes-Nash equilibrium must have high expected social welfare. We do this by considering deviations in which an agent uses the bidding strategy from the single-player game described in \autoref{sec:subadditive}, applied to some subset of the objects.  This subset of objects is chosen randomly: agent $i$ draws a new profile of types for his opponents from the type distribution, then considers bidding for the set he would be allocated under this ``virtual'' type profile.  At a BNE, agent $i$ cannot benefit from such a randomized deviation; we show this implies 
that the social welfare at equilibrium is at least a constant times the optimal welfare.

\section{Preliminaries}
\label{sec:prelim}

%
%


\subsection{Auctions and Equilibria}

\paragraph{Combinatorial Auctions.}
In a combinatorial auction, $\nitem$ items are sold to $\nbidder$ bidders.  Each bidder has a private combinatorial valuation
captured by a set function
$\val:2^{[\nitem]}\to\R_+$ over different bundles $\bundle\subseteq[\nitem]$.  Throughout the paper we assume the
valuations are \emph{monotone}, i.e.\ for every subset $\bundle \subseteq T \subseteq [\nitem]$ it holds that $\val(\bundle) \leq \val(T)$.
In a {\em Bayesian} (partial-information) setting, the bidders' valuation profile $\vals$ is drawn from a commonly known product
distribution\footnote{Whenever an expectation is taken with
respect to valuations, it will be assumed that they are drawn from these corresponding distributions.}
$\valdists = \valdist_1 \times \cdots \times \valdist_n$.
The outcome of an auction consists of an allocation $\allocs = (\alloc_1, \cdots, \alloc_\nbidder) \in 2^{[\nitem]
\times n}$, where $\alloc_i$ is the bundle of items allocated to bidder~$i$, and payments made by each bidder.  The
\emph{social welfare} of an allocation is $\sum_{i \in [\nbidder]} \vali(\alloci)$.
For any given valuation profile $\vals$, we let $(\opt_1^\vals, \dotsc, \opt_n^\vals)$ denote the welfare-maximizing assignment for profile $\vals$.

\paragraph{Simultaneous Item-Bidding Auctions.}
In a simultaneous item-bidding auction,
each bidder simultaneously submits a vector of bids, one for each item. The outcome of the auction is then decided item by item
according to the bids placed on each item.  In this paper we study two forms of such auctions: \emph{simultaneous first
price auctions} and \emph{simultaneous second price auctions}\footnote{The word ``simultaneous'' is often omitted, as we
study only simultaneous (in contrast to sequential) auctions.}.  In both auctions, each item is allocated to the bidder
who has placed the highest bid on it (breaking ties arbitrarily but consistently).  In a (simultaneous) first price
auction, the winner of each item pays his bid on that item, and in a (simultaneous) second price auction, the winner of
each item pays the second highest bid on that item.  We now give a more formal description of this process.

We generally write $\bidi(j)$ to denote the bid of player~$i$ on item~$j$, and
$\vbidi$ for the vector of bids placed by bidder $i$.  Alternatively, we may think of agent $i$'s bid $\bidi$ as an additive function
$\bidi(S)=\sum_{j\in S}\bidi(j)$ that corresponds\footnote{There is an easy equivalence between an additive function
$a(\bundle) \coloneqq \sum\limits_{j\in\bundle}a(\{j\})$ and its concise vector description
$\vec{a}=(a(\{1\}),\dots, a(\{\nitem\}))$. We will use functional and vector
representations interchangeably as the situation demands.
} to the bid-vector $\vbidi$.
Given a sequence of bid profiles $\bids=(\bid_1,\dots,\bid_\nbidder)$, we write $\wini(\bids)$
for the set of items won by bidder $i$, and $\vpricei \in \R_+^{\nitem}$ the vector of payments made by bidder~$i$ on the items.  In this notation, the
first- and second-price auctions can be summarized as follows:
\begin{equation*}
\begin{array}{c|c|c|}
\cline{2-3}
 & \text{First-price} & \text{Vickrey} \\
\cline{2-3}
\text{won set:} & \multicolumn{2}{c|}{\wini(\bids) = \{j \in [\nitem] \: | \: \bidi(j) > \bidi[k](j), \forall k \neq i\}}\\
\cline{2-3}
\text{payment:}
&
\pricei(j) = \left\{
\begin{array}{ll}
\bidi(j), & j \in \wini(\bids) \\
0, & j \notin \wini(\bids)
\end{array}
\right.
&
\pricei(j) = \left\{
\begin{array}{ll}
\max_{k \neq i} \bidi[k](j), & j \in \wini(\bids) \\
0, & j \notin \wini(\bids)
\end{array}
\right. \\
\cline{2-3}
\end{array}
\end{equation*}


%

We assume bidders have quasi-linear utilities, i.e.\
the \emph{utility} of bidder~$i$ for a given bid profile $\bids$ is
given by $\util_i(\bids) = \vali(\wini(\bids)) - \pricei(\wini(\bids))$.

\paragraph{A Single Bidder's Perspective on Bidding}
In both first and second price auctions, the set of items won by a bidder~$i$ bidding~$\bid_i$ is determined solely by a
coordinate-wise comparison between $\bid_i$ and the largest bid placed by the other bidders.  Let $\projbid_i(\bidsmi)$
be the vector whose $j$-th component is $\max_{k \neq i} \bidi[k](j)$.  It is often convenient to
write $\win(\bid_i, \bidsmi)$ as $\win(\bidi, \vprice)$ where $\vprice = \projbid_i(\bidsmi)$.
We think of $\vprice$ as the vector of prices perceived by bidder~$i$:
in the second price auction, the bidder pays the price on an item if his bid exceeds it; and in the first price auction the bidder pays his own bid on such an
item, and $\vprice$ is the minimum such winning bid.  It is in this light that we often write $\projbid_i(\bidsmi)$ as prices~$\vprice$ when this causes no confusion.  We will also shorten the notation $\val(\win(\bid, \vprice))$ to $\val(\bid,\vprice)$, meaning the value obtained when bidding $\bid$ against perceived prices $\vprice$.

\paragraph{Strategies and Equilibria.}
Buyers select their bids strategically in order to maximize utility.
The bidding behavior of a buyer given its valuation is described by a \emph{strategy}.  A strategy~$\strat_i$ maps each valuation $\val_i$ to a distribution over bid vectors; we interpret
$\strat_i(\val_i)$ as the (possibly randomized) set of bids placed by bidder $i$ when his type is $\val_i$.

\begin{definition}[Bayesian Nash Equilibria]
\label{def:bne}
A profile of strategies $\strats = (\strat_1(\val_1), \dotsc, \strat_\nbidder(\val_\nbidder))$ is in \emph{Bayes-Nash equilibrium} (BNE) for distribution $\valdists$ if, for every buyer $i$, type $\vali$, and bidding strategy $\wstrati$,
\[
   \Ex[\valsmi]{\Exlong[\substack{\bidsmi \sim \strats(\valsmi),\\ \bidi\sim\strati(\vali)}]{\util_i(\bidi,\bidsmi)}}
   \geq
   \Ex[\valsmi]{\Exlong[\substack{\bidsmi \sim \strats(\valsmi),\\ \wbidi\sim\wstrati}]{\util_i(\wbidi, \bidsmi)}}.
\]
\end{definition}

Given Fubini's Theorem, we can shorten the condition as follows (such shorthand forms are used throughout the paper):
\begin{equation}
\label{eq:BNEcond}
\Ex[\valsmi, \bids \sim \strats(\vals)]{\util_i(\bids)}  \geq  \Ex[\valsmi, \bids \sim \strats(\vals), \wbidi\sim\wstrati]{\util_i(\wbidi, \bidsmi)}.
\end{equation}

\begin{definition}[Bayesian Price of Anarchy]
\label{def:bpoa}
Given an auction type (either first- or second-price), the \emph{Bayesian price of anarchy} (BPoA) is
the worst-case ratio between the optimal expected welfare and the expected welfare at a BNE and is given by
\[
   \max_{\substack{(\valdists,\ \strats):\\ \strats \text{ a BNE for } \valdists}}
   \frac{\Ex[\vals]{\sum_i \val_i(\opt_i^\vals)}}{\Ex[\vals,\bids \sim \strats(\vals)]{\sum_i \val_i(\win_i(\bids))}}.
\]
\end{definition}
For second price auctions we will consider BPoA under natural restrictions on the strategies used by the bidders.  In
such cases, the maximum in \autoref{def:bpoa} is taken with respect to BNE under that restricted class of strategies.
We note that a BNE is guaranteed to exist as long as the space of valuations and potential bids is discretized, say with all values expressed as increments of some $\epsilon > 0$.
A more detailed discussion of BNE existence is given in Appendix \ref{app:existence}.

\subsection{Subadditive Valuations}

We focus on valuations that are complement-free in the following general sense:

\begin{definition}
\label{def:subadd}
A set function $\val: 2^{[\nitem]}\to\R_{+}$ is \emph{subadditive} if, for any
subsets $S_1,S_2\subset [\nitem]$,
\[
\val(S_1)+\val(S_2)\ge\val(S_1\cup S_2).
\]
\end{definition}

The class of subadditive functions strictly includes a hierarchy of more restrictive complement-free functions such as
submodular and gross substitute functions (see \citealp{LLN06} for definitions and discussions).  Among these, the XOS functions, as
defined below, have a particular kinship with subadditive functions.
XOS literally means XOR (taking the maximum) of OR's (taking sums), and this class of valuations is known to be equivalent to the class of \emph{fractionally subadditive} functions \citep{Feige09}.

\begin{definition}
\label{def:XOS}
A function $\val: 2^{[\nitem]} \to \R_+$ is said to be \emph{XOS} if there exists a collection of additive functions
$a_1(\cdot),\ldots ,a_k(\cdot)$ (that is, $a_i(\bundle) \coloneqq \sum\limits_{j\in\bundle}a_i(\{j\})$ for every set
$\bundle\subset[\nitem]$), such that for each $\bundle \subseteq [\nitem]$,
$
\val(\bundle) \coloneqq \max_{1\le i\le k}a_i(\bundle).
$
\end{definition}

One of the characterizations of XOS functions uses the following definition.

\begin{definition}
\label{def:dominance}
A function $f(\cdot)$ is said to be \emph{dominated} by a set function $g(\cdot)$ if
for any subset $\bundle \subseteq [\nitem]$, $f(\bundle) \leq g(\bundle)$.  We say that a vector $\vec{a}=(a_1,\dots,a_\nitem)$ is dominated by a set function $\val(\cdot)$, if as an additive function $a(\cdot)$ is dominated by $\val(\cdot)$.
\end{definition}

It is not too difficult to observe that $\val(\cdot)$ is XOS
if and only if for every set $T\subset[\nitem]$ there is an additive function $a(\cdot)$ dominated by $\val(\cdot)$ such that $a(T)=\val(T)$.

For a general subadditive function $\val(\cdot)$, it can be the case that any additive function $a(\cdot)$ dominated by $\val(\cdot)$ has
$\Omega(\log(\nitem))$ gap from $\val([\nitem])$, i.e. $\Omega(\log(\nitem))a([\nitem])\le \val([\nitem]),$ (See
\citealp{BR11} for such an example) and a logarithmic factor is also an upper bound.
Previous work that attempted to bound the BPoA for subadditive valuations \citep{BR11,HKMN11} provided constant bounds for XOS valuations, then used the logarithmic factor separation between XOS and subadditive valuations to establish a logarithmic upper bound on the BPoA for subadditive valuations.
While it seems plausible to use the close relation between XOS and subadditive valuations, any analysis that follows this trajectory would encounter this inevitable logarithmic gap.
The challenge, therefore, is in developing a new proof technique for subadditive valuations, which does not go through XOS valuations.
This is the approach taken in this work.


\subsection{Overbidding}
\label{sec:nob}
It is well known that in second price auctions, even with only a single item, the price of anarchy can be infinite when bidders are not
restricted in their bids\footnote{A canonical example is two bidders who value the item at $0$ and a
large number~$h$, respectively, but the first bidder bids $h + 1$ and the second bidder bids~$0$.}.  To exclude such
pathological cases, previous literature \citep[e.g.\@][]{CKS08, BR11} has made the following \emph{no-overbidding}
assumption standard\footnote{We note that such no-overbidding assumptions were also made in other contexts
\citep[e.g.\@][]{LB10, PT10}.}:

\begin{definition}
\label{def:snob}
A bidder is \emph{strongly no-overbidding} if his bid $\bid(\cdot)$ is dominated by his valuation $\val(\cdot)$.
\end{definition}

In other words, a bidder is guaranteed to derive non-negative utility, no matter what are the prices in the market.
Thus strong no overbidding is a strong risk-aversion assumption on the buyers.  One may also consider less risk
concerned bidders---in the following we generalize a weaker assumption of no-overbidding introduced by \citet{FKL12}.

\begin{definition}
\label{def:wnob}
Given a price distribution~$\dist$, a bidder is \emph{weakly no-overbidding} if his bid vector~$\bid$ satisfies that
$\Ex[\price \sim \dist]{\val(\win(\bid, \price))} \geq \Ex[\price \sim \dist]{\bid(\win(\bid, \price))}$, where
$\win(\bid, \price)$ denotes the subset of items he wins when he bids~$\bid$ at price~$\price$, i.e., $\win(\bid,
\price) = \{j \in
[\nitem] \mid \bid(j) \geq \price(j) \}$.
\end{definition}

We will bound BPoA under both weakly and strongly no-overbidding assumptions for simultaneous second price auctions.

\section{Bidding Strategies Under Uncertain Prices}
\label{sec:subadditive}

As discussed in \autoref{sec:prelim}, a bidder in a simultaneous auction faces the problem of maximizing his utility
in presence of uncertain prices (which are the largest bids placed by other bidders).
While this maximization problem is intricate, we show in this section particular bidding strategies that result in
utilities comparable with the bidder's value of the whole bundle minus the expected total
prices.  In other words, given a price distribution $\dist$, it is desired to have a bidding strategy $\bid$ such that
\begin{align}
\label{eq:goodbid}
\Ex[\prices \sim \dist]{\val(\bid, \price)} - \bid([\nitem]) \geq \alpha \val([\nitem]) - \Ex[\price \sim
\dist]{\price([\nitem])},
\end{align}
for some constant~$\alpha$.  Such bidding strategies are key ingredients of the BPoA proofs in later sections, and may be of interest on their own.


For fixed prices, achieving \eqref{eq:goodbid} is trivial, even for $\alpha = 1$; 
indeed, given a price vector $\vprice$, by bidding according to 
$\bid=\price$, a bidder obtains $\val(\bid,\price)- \bid([\nitem]) = \val([\nitem]) - \price([\nitem])$.  
The case in which prices are drawn at random is more intricate, and is the subject of the remainder of this section.

\begin{lemma}[\textbf{Bidding against price distributions}]
\label{cl:bidD}
For any distribution~$\dist$ of prices~$\price$ and any subadditive valuation $\val(\cdot)$
there exists a bid $\bid_0$ such that
\begin{align}
\label{eq:bidD}
\Ex[\price\sim\dist]{\val(\bid_0,\price)}-\bid_0([\nitem]) \ge
\frac{1}{2}\val([\nitem])  - \Ex[ \price\sim\dist]{\price([\nitem])}.
\end{align}
\end{lemma}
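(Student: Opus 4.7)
The plan is to use a randomized deviation inspired by the price distribution itself, then derandomize by averaging. First, consider the randomized bid $\bid$ sampled from $\dist$ independently of the realized price vector $\price \sim \dist$. Since $\bid$ and $\price$ are exchangeable, the set $\win(\bid,\price)$ of items won by the bidder has the same distribution as its ``swap'' $\win(\price,\bid)$, which (absent ties) equals $[\nitem] \setminus \win(\bid,\price)$.

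Next, invoke subadditivity of $\val$ to get
\[
\val(\win(\bid,\price)) + \val([\nitem]\setminus \win(\bid,\price)) \geq \val([\nitem]).
\]
Taking expectation over $(\bid,\price)$ and using the distributional symmetry established above, both summands have equal expectation, which yields $\Ex[\bid,\price]{\val(\bid,\price)} \geq \frac{1}{2}\val([\nitem])$.

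For the bidding cost, since $\bid \sim \dist$ we have $\Ex[\bid]{\bid([\nitem])} = \Ex[\price]{\price([\nitem])}$. Subtracting this from the previous inequality gives
\[
\Ex[\bid,\price]{\val(\bid,\price) - \bid([\nitem])} \geq \frac{1}{2}\val([\nitem]) - \Ex[\price]{\price([\nitem])}.
\]
Since the left-hand side is itself an expectation over $\bid \sim \dist$, an averaging argument produces a deterministic realization $\bid_0$ in the support of $\dist$ for which the same bound holds with the expectation taken over $\price$ only. This is exactly \eqref{eq:bidD}.

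The main obstacle is tie-breaking. Because the auction uses strict winning, items $j$ with $\bid(j) = \price(j)$ lie in neither $\win(\bid,\price)$ nor $\win(\price,\bid)$, so these two sets do not partition $[\nitem]$ when $\dist$ has atoms, and the symmetry step loses the mass on ties. I would handle this by adding an infinitesimal random perturbation to the sampled bid (equivalently, replacing $\dist$ with an atomless approximation), running the argument above on the perturbed version, and passing to the limit; since the target inequality is closed, the lemma as stated is preserved.
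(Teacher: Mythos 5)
Your proof is correct and is essentially the paper's argument: sample the bid from the price distribution $\dist$, use subadditivity to show the expected value of the won set is at least $\frac12 \val([\nitem])$, observe the expected bid equals the expected price, and derandomize by averaging. The one place you diverge is the middle step, and it creates a complication you don't actually need. You identify $\win(\price,\bid)$ with $[\nitem]\setminus\win(\bid,\price)$, which forces you to worry about ties and appeal to an atomless perturbation of $\dist$. The paper avoids this entirely: it writes
\[
\Ex[\bid,\price]{\val(\bid,\price)} \;=\; \tfrac12\,\Ex[\bid,\price]{\val(\bid,\price)+\val(\price,\bid)} \;\ge\; \tfrac12\,\val([\nitem]),
\]
where the first equality is pure exchangeability (no complementation involved) and the second inequality holds \emph{pointwise}, because under the paper's single-bidder convention $\win(\bid,\price)=\{j : \bid(j)\ge\price(j)\}$ (see Definition~\ref{def:wnob}) we have $\win(\bid,\price)\cup\win(\price,\bid)\supseteq[\nitem]$ for every realization — tied items simply land in both sets, which is harmless for subadditivity together with monotonicity. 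So the tie-breaking obstacle you flag is an artifact of insisting the two win sets partition $[\nitem]$; once you only need them to \emph{cover} $[\nitem]$, the atomless approximation and limiting argument can be dropped, and the proof becomes cleaner. Everything else — the Fubini swap, the equality $\Ex[\bid]{\bid([\nitem])}=\Ex[\price]{\price([\nitem])}$, and the averaging argument to extract a deterministic $\bid_0$ — matches the paper exactly.
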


\begin{proof}
We show a random bidding strategy that guarantees the desired inequality in expectation, and infer the existence of a bid, drawn from the suggested distribution, that achieves the same inequality.
Consider a bid that is drawn according to the exact same distribution as the prices. 
It holds that
\begin{align*}
\Ex[\bid\sim\dist]{\Ex[\price\sim\dist]{\val(\bid,\price)}}  =  \Ex[\price\sim\dist]{\Ex[\bid\sim\dist]{\val(\bid,\price)}} &=
\frac{1}{2} \Ex[\bid\sim\dist]{\Ex[\price\sim\dist]{\val(\bid,\price)+\val(\price,\bid)}}\\
&\ge \frac{1}{2} \Ex[\bid\sim\dist]{\Ex[\price\sim\dist]{\val([\nitem])}} = \frac{1}{2}\val([\nitem]),
\end{align*}
where the inequality follows from subadditivity (which guarantees that $\val(\bid,\price)+\val(\price,\bid) \geq \val([\nitem]$ for every $\price$ and $\bid$). 
Using the last inequality, it follows that
\[
\Ex[\bid\sim\dist]{\Ex[\price\sim\dist]{\val(\bid,\price)}-\bid([\nitem])} \ge
\frac{1}{2}\val([\nitem])-\Ex[\bid\sim\dist]{\bid([\nitem])}=\frac{1}{2}\val([\nitem])-\Ex[\price\sim\dist]{\price([\nitem])}.
\]
Since a bid drawn from $\dists$ satisfies \eqref{eq:bidD} in expectation, there must exist a bid $\bid_0$ satisfying \eqref{eq:bidD}.
  
%
\end{proof}

\subsection{No-Overbidding Strategies Under Uncertain Prices}
\label{sec:nob-price}
As noted in \autoref{sec:nob}, in order to obtain any meaningful bound on BPoA for second price auctions, one needs to assume that
bidders are not overbidding. 
Unfortunately, \autoref{cl:bidD} is not concerned with such requirements.
This problem is addressed in \autoref{cl:bidDsafe}, where it is shown that a strongly no-overbidding strategy analogous to that in
\autoref{cl:bidD} always exists.

Notably, when the no-overbidding requirement is imposed, the existence of a bid satisfying \eqref{eq:goodbid} is already nontrivial
when the prices are fixed.  
The following lemma, rephrased from \cite{BR11}, establishes its existence:


\begin{lemma}[{\bf Lemma 3.3 in \citealp{BR11}}]
For a given price vector $\price$ and any subadditive valuation $\val(\cdot)$
there exists a bid $\bid$ dominated by $v(\cdot)$ such that
\begin{align*}
\val(\bid, \price)- \bid([\nitem]) \ge \val([\nitem]) - \price([\nitem]).
\end{align*}
\end{lemma}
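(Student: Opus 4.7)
The plan is to construct an explicit bid supported on a carefully chosen subset $T \subseteq [\nitem]$, where on $T$ the bid exactly matches the price vector and elsewhere it is zero. The set $T$ will be chosen as a maximizer of the ``surplus'' $\val(S) - \price(S)$ over subsets $S \subseteq [\nitem]$, and the entire proof will then rest on combining the maximality of $T$ with the subadditivity and monotonicity of $\val$.

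Concretely, I would let $T \in \argmax_{S \subseteq [\nitem]} \bigl(\val(S) - \price(S)\bigr)$ and define the additive bid $\bid$ by $\bid(j) = \price(j)$ for $j \in T$ and $\bid(j) = 0$ for $j \notin T$. With the tie-breaking rule implicit in the definition $\win(\bid,\price) = \{j : \bid(j) \geq \price(j)\}$, this bid wins the set $T$ (any zero-price items outside $T$ contribute nothing to either value or payment), so $\val(\bid,\price) - \bid([\nitem]) = \val(T) - \price(T)$. The required inequality $\val(T) - \price(T) \geq \val([\nitem]) - \price([\nitem])$ is then immediate from the maximality of $T$ applied to the specific competitor $[\nitem]$.

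The technical part is verifying that $\bid$ is dominated by $\val$, i.e., $\bid(S) \leq \val(S)$ for every $S \subseteq [\nitem]$. Since $\bid(S) = \price(S \cap T)$, the goal reduces to $\price(S \cap T) \leq \val(S)$. Here I would invoke the maximality of $T$ against the competitor $T \setminus S$: this yields $\val(T) - \val(T \setminus S) \geq \price(T) - \price(T \setminus S) = \price(T \cap S)$. Subadditivity of $\val$ turns the left-hand side into the upper bound $\val(T \cap S)$, giving $\price(T \cap S) \leq \val(T \cap S)$, and monotonicity extends this to $\val(S)$, completing the chain.

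The main obstacle I anticipate is precisely this dominance check. A naive approach would take $T = [\nitem]$, but then dominance is only guaranteed at the full set, not at every $S$. The non-obvious step is recognizing that optimizing $\val(\cdot) - \price(\cdot)$ over \emph{all} subsets automatically encodes the dominance inequality pointwise, by turning each candidate comparator $T \setminus S$ into exactly the constraint needed, once subadditivity is used to convert differences of $\val$ into values of $\val$ on the appropriate intersection.
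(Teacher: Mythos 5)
Your proof is correct. The chain of inequalities works: taking $T \in \argmax_S \bigl(\val(S)-\price(S)\bigr)$, bidding $\price$ on $T$ and $0$ off $T$ wins at least $T$ (under the $\win(\bid,\price) = \{j : \bid(j) \geq \price(j)\}$ convention of Definition~\ref{def:wnob}), giving utility $\geq \val(T)-\price(T) \geq \val([\nitem])-\price([\nitem])$ by maximality; and the dominance check $\price(S\cap T) \leq \val(T)-\val(T\setminus S) \leq \val(T\cap S) \leq \val(S)$ cleanly combines maximality against $T\setminus S$, subadditivity, and monotonicity. The paper itself does not prove this lemma (it simply cites Lemma~3.3 of \citealp{BR11}), but the construction it \emph{does} spell out for the randomized analogue in \autoref{cl:bidDsafe} is complementary to yours: there one picks a \emph{maximal} set $T'$ with $\val(T') \leq \price(T')$, zeroes out the prices on $T'$, and bids the truncated price vector, with dominance following from the maximality of $T'$ rather than from an argmax. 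Your argmax-based choice is not in general the complement of the paper's $T'$ (for example with $\price=(1,1)$, $\val(\{1\})=\val(\{2\})=\val(\{1,2\})=2$ your $T$ is a singleton while the paper's $T'$ is $[\nitem]$), yet both yield a dominated bid meeting the bound. Your version is arguably the cleaner existence argument for a single fixed $\price$; the truncation construction is the one the paper needs because it applies draw-by-draw to a price distribution, which is precisely what \autoref{cl:bidDsafe} requires.
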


We must now analyze the case when prices are drawn randomly.

\begin{lemma}[\bf{No Overbidding Against Price Distributions}]
\label{cl:bidDsafe}
For any distribution~$\dist$ of prices~$\price$ and any subadditive valuation $\val(\cdot)$
there exists a bid $\bid_0$ dominated by $\val(\cdot)$ such that
\begin{equation}
\label{eq:bidDsafe}
\Ex[\price\sim\dist]{\val(\bid_0,\price)} - \bid_0([\nitem]) \ge
\frac{1}{2}\val([\nitem]) - \Ex[\price\sim\dist]{\price([\nitem])}.
\end{equation}
\end{lemma}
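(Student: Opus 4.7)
The plan is to follow the probabilistic-method strategy of \autoref{cl:bidD}, modified to enforce the dominance condition. I will construct a distribution over \emph{dominated} bids whose expected guarantee equals the right-hand side of \eqref{eq:bidDsafe}, and then invoke the probabilistic method to extract a deterministic witness $\bid_0$. Enforcing dominance on the random bid is the new ingredient, and the natural way to do so is to combine the symmetric sampling of \autoref{cl:bidD} with the fixed-price construction of the just-stated lemma from \cite{BR11}.

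Concretely, I sample $\hat{\price}\sim\dist$ (as in \autoref{cl:bidD}) and apply the preceding lemma with valuation $\val$ and price vector $\hat{\price}$ to obtain a bid $\bid_{\hat{\price}}$ dominated by $\val$ and satisfying the per-realization guarantee $\val(\bid_{\hat{\price}},\hat{\price})-\bid_{\hat{\price}}([\nitem])\geq \val([\nitem])-\hat{\price}([\nitem])$. Since every realization of $\bid_{\hat{\price}}$ is dominated, it suffices to prove
\[
\Ex[\hat{\price},\price\sim\dist]{\val(\bid_{\hat{\price}},\price)-\bid_{\hat{\price}}([\nitem])}\;\geq\;\tfrac{1}{2}\val([\nitem])-\Ex[\price\sim\dist]{\price([\nitem])},
\]
where $\hat{\price}$ and $\price$ are independent; the probabilistic method then delivers the claimed $\bid_0$.

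For the main calculation I exploit the fact that $\hat{\price}$ and $\price$ are iid from $\dist$, in the spirit of the symmetry trick used in \autoref{cl:bidD}. Pairing each realization $(\hat{\price},\price)$ with its swap $(\price,\hat{\price})$ gives
\[
\Ex[\hat{\price},\price]{\val(\bid_{\hat{\price}},\price)}\;=\;\tfrac{1}{2}\Ex[\hat{\price},\price]{\val(\bid_{\hat{\price}},\price)+\val(\bid_{\price},\hat{\price})},
\]
and the two won sets $\win(\bid_{\hat{\price}},\price)$ and $\win(\bid_{\price},\hat{\price})$ are then combined through subadditivity of $\val$; the payment terms $\bid_{\hat{\price}}([\nitem])$ and $\bid_{\price}([\nitem])$ can in turn be bounded above by $\hat{\price}([\nitem])$ and $\price([\nitem])$ respectively, because the BR11 construction only places nonzero bids coordinate-wise at most $\hat{\price}(j)$ (resp.\ $\price(j)$).

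The main obstacle, compared with \autoref{cl:bidD}, is that the truncated bid $\bid_{\hat{\price}}$ is supported only on the set $T^{*}(\hat{\price}):=\argmax_{T\subseteq[\nitem]}\{\val(T)-\hat{\price}(T)\}$ (and similarly for $\bid_{\price}$), so the union of the two won sets need not cover $[\nitem]$ as it did before. The delicate point is therefore to control the contribution of items outside $T^{*}(\hat{\price})\cup T^{*}(\price)$: here I expect to rely on the maximality in the definition of $T^{*}$, together with subadditivity of $\val$, to show that any such ``uncovered'' mass is already paid for by the corresponding $\hat{\price}([\nitem])$ or $\price([\nitem])$ term, so that averaging with the symmetric partner recovers the factor $\tfrac{1}{2}\val([\nitem])$ on the right-hand side.
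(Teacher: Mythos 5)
Your high-level plan --- draw $\widehat{\price}\sim\dist$, apply the BR11 construction per realization to obtain a dominated bid $\bid_{\widehat{\price}}$, then use the probabilistic method together with the symmetry trick from \autoref{cl:bidD} --- is essentially the strategy the paper takes, and it can be made to close. But as written there is a genuine gap in how you handle the bid and price totals. Bounding $\bid_{\widehat{\price}}([\nitem])\leq\widehat{\price}([\nitem])$ is too lossy: combined with the containment of the ``uncovered'' set (items won neither when bidding $\bid_{\widehat{\price}}$ against $\price$ nor when bidding $\bid_{\price}$ against $\widehat{\price}$) in $T(\widehat{\price})\cup T(\price)$, you would only deduce $\Ex{\val(\bid_{\widehat{\price}},\price)-\bid_{\widehat{\price}}([\nitem])}\geq\tfrac{1}{2}\val([\nitem])-2\Ex{\price([\nitem])}$, which is weaker than \eqref{eq:bidDsafe} by a factor of two on the price term. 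The missing ingredient is a cancellation. Taking $T(\widehat{\price})$ to be a \emph{maximal} set with $\val(T)\leq\widehat{\price}(T)$ (not the demand set $T^{*}=\argmax_T\{\val(T)-\widehat{\price}(T)\}$ you describe --- the two need not be complementary, and only the former enjoys the inequality $\val(T)\leq\widehat{\price}(T)$ that your argument relies on), the truncated bid satisfies the exact identity $\bid_{\widehat{\price}}([\nitem])=\widehat{\price}([\nitem])-\widehat{\price}\left(T(\widehat{\price})\right)$, and the uncovered-mass contribution, bounded by $\widehat{\price}\left(T(\widehat{\price})\right)+\price\left(T(\price)\right)$, is then \emph{exactly} offset by the savings in the two bid sums. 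Tracking this cancellation, your symmetric averaging recovers $\tfrac{1}{2}\val([\nitem])$ against a single copy of $\Ex{\price([\nitem])}$, as required.

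The paper sidesteps this delicate bookkeeping by first proving the pointwise inequality $\val(\bid,\qprice)+\qprice([\nitem])\geq\val(\bid,\qpricew)+\qpricew([\nitem])$ for every fixed bid $\bid$, and then passing entirely to the truncated distribution $\distw$. Once both the sampled bid and the compared price are drawn from $\distw$, the union $\win(\bid,\pricew)\cup\win(\pricew,\bid)$ always equals $[\nitem]$, so the argument of \autoref{cl:bidD} applies verbatim with no uncovered mass to chase, and the payment term falls out automatically. Your alternative route of bidding the truncated vector against the original prices $\price\sim\dist$ can be made rigorous, but only by carrying out the cancellation noted above; without it the proof does not close.
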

\begin{proof}

Let $\qprice$ be any price vector in the support of the distribution $\dist$.
Let $T\subseteq[\nitem]$ be a maximal set such that $\val(T)\le \qprice(T)$.
We consider a truncated price vector $\qpricew$, which is $0$ on the coordinates corresponding to $T$ and coincides with
$\qprice$ on the coordinates corresponding to $[\nitem]\setminus T$.

We first observe that {\bf $\qpricew$ is dominated by $\val(\cdot).$}
Indeed, for any set $R \subset[\nitem]\setminus T$ it holds that $\val(R) > \qprice(R)$,
since otherwise
\[
\val(R\cup T)\le\val(R)+\val(T) \le \qprice(R)+\qprice(T)=\qprice(R\cup T),
\]
in contradiction to the fact that $T$ is a maximal set satisfying $\val(T)\le \qprice(T)$.

We next establish that for any bid $\bid$, it holds that
\begin{align}
\label{eq:priceMod}
\val(\bid,\qprice)+\qprice([\nitem])\ge \val(\bid,\qpricew)+\qpricew([\nitem]).
\end{align}

Indeed, we have $\win(\bid,\qpricew)\subseteq \win(\bid,\qprice) \cup T$. Therefore,
$\val(\bid,\qpricew)\le \val(\bid,\qprice) + v(T)$ due to subadditivity of $\val(\cdot)$.
Now \eqref{eq:priceMod} follows by observing that $\qprice([\nitem])-\qpricew([\nitem])=\qprice(T)\ge v(T).$

We next define the distribution $\distw \coloneqq \left\{\qpricew\mid \qprice\sim\dist \right\}$ which
consist of truncated prices drawn from $\dist$. Equation \eqref{eq:priceMod} now extends for any bid $\bid$ to
\begin{align}
\label{eq:DMod}
\Ex[\price\sim\dist]{\val(\bid,\price)+\price([\nitem])}\ge \Ex[\pricew\sim\distw]{\val(\bid,\pricew)
+\pricew([\nitem])}.
\end{align}

Recall that each $\qpricew\sim\distw$ is dominated by $\val(\cdot)$, therefore, bidding any $\bid$ drawn from $\distw$
satisfies the strongly no overbidding requirement.  
Furthermore, by applying \eqref{eq:DMod} to each $\bid\sim\distw$ we get
\begin{align*}
\Ex[\bid\sim\distw]{\Ex[\price\sim\dist]{\val(\bid,\price)+\price([\nitem])}}  & \ge
\Ex[\bid\sim\distw]{\Ex[\pricew\sim\distw]{\val(\bid,\pricew)+\pricew([\nitem])}}\\
&= \Ex[\bid\sim\distw]{\Ex[\pricew\sim\distw]{\val(\bid,\pricew)}}+\Ex[\bid\sim\distw]{\bid([\nitem])}\\
&\ge \frac{1}{2}\val([\nitem])+\Ex[\bid\sim\distw]{\bid([\nitem])},
\end{align*}
where the last inequality follows in a manner similar to the proof of \autoref{cl:bidD}. 
The assertion of the lemma follows
\end{proof}

\section{Price of Anarchy for First Price Auctions}
\label{sec:poa-fpa}
In this section we apply the bidding strategy from \autoref{cl:bidD} to bound the Bayesian price of anarchy of simultaneous first-price auctions.

%
%
%
%

\begin{theorem}
\label{thm:poa-fpa}
In a simultaneous first-price auction with subadditive bidders, the Bayesian price of anarchy is at most $2$.
\end{theorem}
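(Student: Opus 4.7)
The plan is to derive the PoA bound by applying the BNE condition \eqref{eq:BNEcond} with a carefully chosen randomized deviation based on the single-player bidding strategy of \autoref{cl:bidD}. For each bidder $i$, let $\vals'$ be an independent copy of the valuation profile and define the random target bundle $T_i \coloneqq \opt_i^{(\vali,\vals'_{-i})}$. Conditioned on $(\vali,\vals'_{-i})$, apply \autoref{cl:bidD} to the subauction restricted to items in $T_i$: the restriction of $\vali$ to subsets of $T_i$ is still subadditive, and the relevant price distribution is that of $\bigl(\max_{k\ne i}\bidi[k](j)\bigr)_{j\in T_i}$ induced by the equilibrium bids $\bidsmi \sim \strats_{-i}(\valsmi)$. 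Extending the resulting bid $\bid_0$ by zero outside $T_i$, the first-price payment is at most $\bid_0(T_i)$ (and monotonicity ensures value only grows), so \autoref{cl:bidD} yields
\[
\Exlong[\valsmi,\bidsmi]{\util_i(\bid_0,\bidsmi)} \;\ge\; \tfrac{1}{2}\vali(T_i) \;-\; \Exlong[\valsmi,\bidsmi]{\sum_{j\in T_i}\max_{k\ne i}\bidi[k](j)}.
\]

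Next, I plug this (randomized) deviation into the BNE condition \eqref{eq:BNEcond}, sum over $i$, and take expectations over $\vals$, $\vals'$, and $\bids \sim \strats(\vals)$. The left-hand side equals $\sum_i \Ex{\util_i(\bids)} = \Ex{\sum_i \vali(\wini(\bids))} - \Ex{\mathrm{Rev}(\bids)}$, since in a first-price auction each winner pays her own bid. For the welfare term on the right, independence of $\vali$ and $\vals'_{-i}$ gives $(\vali,\vals'_{-i})\sim\valdists$, hence $\sum_i \Ex{\vali(T_i)} = \Ex[\vals]{\sum_i\vali(\opt_i^\vals)}$ is exactly the expected optimal welfare.

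The main technical step is bounding $\sum_i \Ex{\sum_{j\in T_i}\max_{k\ne i}\bidi[k](j)}$ by $\Ex{\mathrm{Rev}(\bids)}$. Because $\vals'_{-i}$ is drawn independently of $\vals$, the indicator $\mathbb{1}[j\in T_i]$ is conditionally independent of $\bidsmi$ given $\vali$; combining this with $(\vali,\vals'_{-i})\sim\valdists$, each item $j$ contributes $\sum_i \Prx[\vals]{j\in\opt_i^\vals}\cdot \Ex[\bidsmi]{\max_{k\ne i}\bidi[k](j)}$. Using the partition property $\sum_i\mathbb{1}[j\in\opt_i^\vals]=1$ together with $\max_{k\ne i}\bidi[k](j)\le\max_k\bidi[k](j)$, this collapses to $\Ex[\bids]{\max_k\bidi[k](j)}$, which summed over $j$ equals $\Ex{\mathrm{Rev}(\bids)}$. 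Combining everything yields $\Ex{\sum_i\vali(\wini(\bids))} - \Ex{\mathrm{Rev}} \ge \tfrac{1}{2}\Ex[\vals]{\sum_i\vali(\opt_i^\vals)} - \Ex{\mathrm{Rev}}$; the revenue terms cancel, giving $\mathrm{BPoA}\le 2$. I expect the main obstacle to be this last step: carefully tracking the two independent sources of randomness $\vals$ and $\vals'$ so that the sum of perceived prices on the random target sets collapses cleanly to the equilibrium revenue.
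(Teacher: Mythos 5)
Your proposal is correct and follows essentially the same approach as the paper's proof: for each bidder $i$, resample an independent copy of the opponents' types to define the target bundle $T_i = \opt_i^{(\vali,\vals'_{-i})}$, apply \autoref{cl:bidD} to the price distribution restricted to $T_i$, plug the resulting randomized deviation into the BNE condition \eqref{eq:BNEcond}, and then use the independence of $\vals'_{-i}$ from $\bidsmi$ together with the partition property $\sum_i \mathds{1}[j\in\opt_i^\vals]=1$ and $\max_{k\ne i}\bidi[k](j)\le\max_k\bidi[k](j)$ to collapse the perceived-price terms to the first-price revenue, which cancels. The paper writes $\valsmi^*$ where you write $\vals'_{-i}$ and introduces an auxiliary $\hval_i$ to make the bid vector explicitly independent of the partition in the final bounding step, but this is the same independence argument you make directly.
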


\begin{proof}
Fix type distributions $\valdists$ and let $\strats$ be a BNE for $\valdists$.  Choose some agent $i$ and an arbitrary subadditive valuation $\vali$. Fix an arbitrary $\valsmi^*$, and let $\vals^* = (\vali, \valsmi^*)$.  Recall that $(\opt_1^{\vals^*}, \dotsc, \opt_n^{\vals^*})$ is the welfare-optimal allocation for $\vals^*$.

Recall that each bid profile $\bidsmi$ defines for bidder~$i$ a price vector $\projbid_i(\bidsmi)$.  Let $\vprice$ be
equal to $\projbid_i(\bidsmi)$ on $\opt_i^{\vals^*}$ and $0$ elsewhere.  
Let $\dists$ be the distribution over these price vectors
$\vprice = \vprice(\bidsmi)$, where $\bids\sim\strats(\vals^*)$. That is, $\dists$ is precisely the distribution over the maximum
bids on the items in $\opt_i^{\vals^*}$, 
excluding the bid of player~$i$.  By \autoref{cl:bidD} (and replacing $[\nitem]$ there by $\opt_i^{\val^*}$), there exists a bid vector $\bidi'$ 
over the objects in $\opt_i^{\vals^*}$ such that, thinking now of $\price$ as an additive function,
\begin{align}
\label{eq:poa-fpa1}
  \Ex[ \price\sim\dist]{\vali(\bidi',\price)}-\bidi'(\opt_i^{\vals^*})
     \ge
  \frac{1}{2}\vali(\opt_i^{\vals^*})  - \Ex[ \price\sim\dist]{\price(\opt_i^{\vals^*})}.
\end{align}
Since $\strats$ forms a BNE, we have that
\begin{align*}
\Exlong[\substack{\valsmi,\\ \bids\sim\strats(\vals)}]{\util_i(\bids)} & \geq \Exlong[\substack{\valsmi,\\ \bids\sim\strats(\vals)}]{ \util_i(\bidi', \bidsmi) }
= \Exlong[\substack{\valsmi,\\ \bids\sim\strats(\vals)}]{\vali(\bidi', \projbid_i(\bidsmi))} - \Exlong[\substack{\valsmi,\\ \bids\sim\strats(\vals)}]{\bidi'(\wini(\bidi',\bidsmi))} \\
& \geq \Ex[\price\sim\dist]{\vali(\bidi',\price)} - \bidi'(\opt_i^{\vals^*}),
\end{align*}
where the last inequality follows from the definition of $\dists$ and the fact that $\wini(\bidi',\bidsmi) \subseteq \opt_i^{\vals^*}$ for all $\bidsmi$.  Applying \eqref{eq:poa-fpa1} and the definition of $\price\sim\dists$, we conclude that
\begin{equation}
\label{eq:poa-fpa2}
  \Exlong[\substack{\valsmi,\\ \bids\sim\strats(\vals)}]{\util_i(\bids)}  \geq
  \frac{1}{2}\vali(\opt_i^{\vals^*})  - \Exlong[\substack{\valsmi,\\ \bidsmi\sim\stratsmi(\valsmi)}]{\sum_{j \in \opt_i^{\vals^*}}\max_{k \neq i}\bid_k(j)}.
\end{equation}
Taking the sum over all $i$ and expectations over all $\vali\sim\valdist_i$ and $\valsmi^*\sim\valdist_{-i}$, we conclude that
\begin{equation}
\label{eq:poa-fpa3}
  \sum_i \Exlong[\substack{\vals, \valsmi^*,\\ \bids\sim\strats(\vals)}]{\util_i(\bids)}
    \geq
  \frac{1}{2}\sum_i \Exlong[\vali, \valsmi^*]{\vali(\opt_i^{\vals^*})}  - \sum_i\Exlong[\substack{\vals, \valsmi^*,\\ \bidsmi\sim\stratsmi(\valsmi)}]{\sum_{j \in \opt_i^{\vals^*}}\max_{k \neq i}\bid_k(j)}.
\end{equation}
Let us consider each of the three terms of \eqref{eq:poa-fpa3} in turn.  The LHS is equal to $\Ex[\vals,\bids\sim\strats(\vals)]{\sum_i \util_i(\bids)}$, as $\valsmi^*$ does not appear inside the expectation.  The first term on the RHS is equal to $\frac{1}{2}\Ex[\vals]{\sum_i \vali(\opt_i^{\vals})}$, by relabeling
$\valsmi^*$ by $\valsmi$.
For the final term on the RHS of \eqref{eq:poa-fpa3}, we note that
\begin{align*}
  \sum_i\Exlong[\substack{\vals, \valsmi^*,\\ \bidsmi\sim\stratsmi(\valsmi)}]{\sum_{j \in \opt_i^{\vals^*}}\max_{k \neq i}\bid_k(j)}
& \leq
  \sum_i\Exlong[\substack{\vals, \valsmi^*, \hvali,\\ \bids\sim\strats(\hvali,\valsmi)}]{\sum_{j \in \opt_i^{\vals^*}}\max_{k}\bid_k(j)} \\
& =
  \Exlong[\vals, \bids\sim\strats(\vals)]{\sum_{j}\max_{k}\bid_k(j)},
\end{align*}
where the first inequality is due to taking a maximum over a larger set, and the last equality follows since $\opt_i^{\vals^*}$ form a partition of $[m]$ (and by relabeling).  We note a subtlety: in the first line we select bid vector $\bids$ with respect to $(\hvali, \valsmi)$, rather than $(\vali, \valsmi)$, so that $\bids$ is independent of the partition $(\opt_1^{\vals^*}, \dotsc, \opt_n^{\vals^*})$.  Applying these simplifications to the terms of \eqref{eq:poa-fpa3}, we conclude that
\begin{equation}
\label{eq:poa-fpa4}
\Ex[\vals, \bids\sim\strats(\vals)]{\sum_i \util_i(\bids)} \geq \frac{1}{2}\Ex[\vals]{\sum_i \vali(\opt_i^{\vals})} -
\Ex[\vals, \bids\sim\strats(\vals)]{\sum_{j}\max_{k}\bid_k(j)}.
\end{equation}
Since we are in a first-price auction, we have that $\Ex[\vals,\bids\sim\strats(\vals)]{\sum_i \util_i(\bids)} = \Ex[\vals,\bids\sim\strats(\vals)]{\sum_i \vali(\wini(\bids))} - \Ex[\vals, \bids\sim\strats(\vals)]{\sum_{j}\max_{k}\bid_k(j)}$.  Equation \eqref{eq:poa-fpa4} therefore implies
\[ \Ex[\vals,\bids\sim\strats(\vals)]{\sum_i \vali(\wini(\bids))} \geq \frac{1}{2}\Ex[\vals]{\sum_i \vali(\opt_i^{\vals})} \]
which yields the desired result.
\end{proof}

\noindent {\bf Remark:} In \autoref{sec:lb-correlated}, we show that the upper bound does not carry over to the case where the bidders' valuations are correlated. 
In particular, a polynomial lower bound of $\Omega(n^{1/6})$ is given on the Bayesian price of anarchy for this case. 
The construction is based heavily upon a lower bound due to \cite{BR11} for second-price auctions. 

\section{Price of Anarchy for Second Price Auctions}
\label{sec:poa-spa}

We now turn to the case of simultaneous second-price auctions.  We show that the Bayesian price of anarchy of such an auction
is always at most $4$ for subadditive bidders, assuming that bidders select strategies that satisfy either the
strong or weak no-overbidding assumption.


\begin{theorem}
\label{thm:snob-poa}
In simultaneous second price auctions where bidders have subadditive valuations independently drawn and each of them
is strongly or weakly no-overbidding, the Bayesian price of anarchy is at most~$4$.
\end{theorem}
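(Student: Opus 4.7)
The plan is to mirror the proof of \autoref{thm:poa-fpa}, but substitute \autoref{cl:bidDsafe} for \autoref{cl:bidD} so that the deviation bid is admissible under either no-overbidding regime, and then handle the second-price payment rule when converting the deviation bound into a welfare guarantee. Fix type distributions $\valdists$ and a BNE $\strats$ in no-overbidding strategies. For each bidder $i$, valuation $\vali$, and virtual profile $\valsmi^*$ with $\vals^*=(\vali,\valsmi^*)$, let $\dist$ be the distribution of the vector that equals $\projbid_i(\bidsmi)$ on coordinates in $\opt_i^{\vals^*}$ and is $0$ elsewhere, where $\valsmi\sim\valdists_{-i}$ and $\bidsmi\sim\stratsmi(\valsmi)$. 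Applying \autoref{cl:bidDsafe} with ground set $\opt_i^{\vals^*}$ produces a bid $\bidi'$ dominated by $\vali$ satisfying
\[
\Ex[\price\sim\dist]{\vali(\bidi',\price)}-\bidi'(\opt_i^{\vals^*}) \;\ge\; \tfrac12\,\vali(\opt_i^{\vals^*}) - \Ex[\price\sim\dist]{\price(\opt_i^{\vals^*})}.
\]

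The key second-price observation is a payment bound: when $i$ bids $\bidi'$ against equilibrium opponents, each won item $j$ has payment $\max_{k\ne i}\bidi[k](j)\le\bidi'(j)$, so the total payment is at most $\bidi'(\wini(\bidi',\bidsmi))\le\bidi'(\opt_i^{\vals^*})$ (the last inequality uses $\wini(\bidi',\bidsmi)\subseteq\opt_i^{\vals^*}$, since $\bidi'$ is supported on $\opt_i^{\vals^*}$). Combined with the display above,
\[
\Exlong[\substack{\valsmi,\\ \bidsmi\sim\stratsmi(\valsmi)}]{\util_i(\bidi',\bidsmi)} \;\ge\; \tfrac12\,\vali(\opt_i^{\vals^*}) - \Ex[\price\sim\dist]{\price(\opt_i^{\vals^*})}.
\]
Applying the BNE condition to this (randomized-over-$\valsmi^*$) deviation, summing over $i$, taking expectations over $\vali$ and $\valsmi^*$, and relabeling $\valsmi^*$ as $\valsmi$ (with a fresh copy $\hvali$ to decouple the virtual partition from the realized opponent bids in the cross-term, exactly as in \eqref{eq:poa-fpa3}--\eqref{eq:poa-fpa4}), I obtain
\[
\sum_i \Ex[\vals,\bids\sim\strats(\vals)]{\util_i(\bids)} \;\ge\; \tfrac12\,\Ex[\vals]{\sum_i\vali(\opt_i^{\vals})} - \Ex[\vals,\bids\sim\strats(\vals)]{\sum_j\max_k\bidi[k](j)}.
\]

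The calculation now diverges from the first-price case. Since second-price utilities satisfy $\sum_i\util_i(\bids) = \sum_i\vali(\wini(\bids)) - R(\bids)$, where $R(\bids)\ge 0$ is the total second-price revenue, I need one more ingredient: bounding the residual term $\Ex[\vals,\bids\sim\strats(\vals)]{\sum_j\max_k\bidi[k](j)} = \Ex[\vals,\bids\sim\strats(\vals)]{\sum_i\bidi(\wini(\bids))}$ by the expected equilibrium welfare. Under strong no-overbidding this holds pointwise, as $\bidi$ dominated by $\vali$ gives $\bidi(\wini)\le\vali(\wini)$; under weak no-overbidding the same inequality holds in expectation directly by definition, with $\dist$ taken to be the equilibrium price distribution $i$ faces. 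Substituting and discarding the non-negative $\Ex[\vals,\bids\sim\strats(\vals)]{R(\bids)}$ yields $2\,\Ex[\vals,\bids\sim\strats(\vals)]{\sum_i\vali(\wini(\bids))} \ge \tfrac12\,\Ex[\vals]{\sum_i\vali(\opt_i^{\vals})}$, i.e., BPoA~$\le 4$.

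The main conceptual point requiring care is that the \emph{same} deviation $\bidi'$ must be admissible under both no-overbidding regimes; this is automatic because \autoref{cl:bidDsafe} delivers a bid strongly dominated by $\vali$, which is strongly and, a fortiori, weakly no-overbidding against every price distribution. The only remaining subtlety, inherited from the first-price proof, is keeping the virtual partition $\opt_i^{\vals^*}$ independent from the opponents' realized bids in the cross-term, which is handled via the auxiliary draw $\hvali$.
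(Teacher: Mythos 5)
Your proposal is correct and takes essentially the same route as the paper: substitute \autoref{cl:bidDsafe} for \autoref{cl:bidD} in the argument of \autoref{thm:poa-fpa} to obtain \eqref{eq:poa-fpa4}, then observe that (a) expected welfare at equilibrium dominates expected utility (you phrase this as discarding the non-negative revenue term), and (b) the max-bid sum is bounded by expected equilibrium welfare under either no-overbidding assumption. One small thing you did that the paper elides with its ``derive inequality \eqref{eq:poa-fpa4} in precisely the same way'' phrasing: in the first-price proof the equality $\util_i(\bidi',\bidsmi) = \vali(\bidi',\projbid_i(\bidsmi)) - \bidi'(\wini(\bidi',\bidsmi))$ holds exactly, whereas in the second-price case the deviator pays at most $\bidi'(\wini(\bidi',\bidsmi))$, so that step becomes an inequality (in the favorable direction). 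You made this explicit, which is a useful clarification but not a different argument.
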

\begin{proof}
Fix type distributions $\valdists$ and let $\strats$ be a BNE for $\valdists$.  We can then derive inequality
\eqref{eq:poa-fpa4} in precisely the same way as in the proof of \autoref{thm:poa-fpa} (using now \autoref{cl:bidDsafe}
instead of \autoref{cl:bidD}); we then have that
\begin{equation}
\label{eq:snob-poa1}
\Ex[\vals,\bids\sim\strats(\vals)]{\sum_i \util_i(\bids)} \geq \frac{1}{2}\Ex[\vals]{\sum_i \vali(\opt_i^{\vals})} - \Ex[\vals, \bids\sim\strats(\vals)]{\sum_{j}\max_{k}\bid_k(j)}.
\end{equation}
Note that $\Ex[\vals,\bids\sim\strats(\vals)]{\sum_i \vali(\win_i(\bids))} \geq \Ex[\vals,\bids\sim\strats(\vals)]{\sum_i \util_i(\bids)}$.  Also, since each agent $i$
is assumed to be strongly or weakly no overbidding,
\begin{align*}
\Ex[\vals, \bids\sim\strats(\vals)]{\sum_{j}\max_{k}\bid_k(j)}
= \Ex[\vals, \bids\sim\strats(\vals)]{\sum_i \sum_{j \in \win_i(\bids)}\bidi(j)}
\leq \Ex[\vals, \bids\sim\strats(\vals)]{\sum_i \vali( \win_i(\bids))}.
\end{align*}
Equation \eqref{eq:snob-poa1} therefore implies
\[
  \Ex[\vals,\bids\sim\strats(\vals)]{\sum_i \vali(\win_i(\bids))}
\geq
  \frac{1}{2}\Ex[\vals]{\sum_i \vali(\opt_i^{\vals})} - \Ex[\vals, \bids\sim\strats(\vals)]{\sum_i \vali( \win_i(\bids))}
\]
as required.
\end{proof}

\citeauthor{BR11} showed that the Bayesian price of anarchy of second price auctions can be strictly worse than the pure
price of anarchy when bidders are strongly no overbidding.  In the following we give an example showing that such a gap
exists also when bidders are weakly no overbidding.
We note that this gap is not implied by the example given by \citeauthor{BR11} since the strategy profile in their example is not a BNE under the weaker no overbidding notion.

\begin{example}[\textbf{Bayesian price of anarchy can be strictly larger than~$2$ when bidders are weakly no overbidding and have subadditive valuations}]
\label{ex:wnob-poa}
Consider an instance with 2 bidders and 6 items, where the set of items is divided into two sets, of 3 items each, denoted $S_1$ and $S_2$.
Throughout, we shall present the example with parameters $a$ and $b$ for ease of presentation.
The lower bound is obtained by substituting $a=0.06$ and $b=0.85$.
In what follows, we describe the valuation function of bidder 1; bidder 2's valuation is symmetric w.r.t. the sets $S_1$ and $S_2$.
Bidder 1's valuation over the items in $S_1$ is additive with respective values (over the 3 items) of $(a,a,b), (b,a,a)$ or $(a,b,a)$, each with probability $1/3$.
Bidder 1's valuation over the items in $S_2$ is 2 if she gets all three items, and 1 for any non-empty strict subset of $S_2$.
Bidder 1's valuation for an arbitrary subset $T$ the maximum of her value for $T \cap S_1$ and her value for $T \cap S_2$.
One can verify that this is indeed a subadditive valuation function.

We claim that the profile in which each bidder $i$ bids her true (additive) valuation on $S_i$ and $0$ on all other items 
is a Bayesian equilibrium with weakly no overbidding bidders for the specified parameter values.
The full proof is deferred to the appendix, where it is shown that the only beneficial deviations break the weakly no-overbidding assumption.
Under this bidding profile, each bidder derives a utility of $2a+b$, amounting to a social welfare of $2(2a+b) = 1.94$.
In contrast, if bidder 1 is allocated $S_2$ and bidder 2 is allocated $S_1$, then each bidder derives a utility of $2$, amounting to a social welfare of $4$.
Consequently, the Bayesian price of anarchy is $4/1.94 > 2.061$.
\end{example}

\bibliographystyle{apalike}
\bibliography{bibs}

\appendix

\section{A Proof of Example~\ref{ex:wnob-poa}}
\label{sec:lb-poa-spa}
In this section we prove that the strategy profile in \autoref{ex:wnob-poa} is a Bayesian equilibrium with weakly no overbidding bidders.
To establish this, we need to show that every beneficial deviation breaks the weakly no overbidding assumption.
Notably, since weak no-overbidding is required for every bid in the support, it is sufficient to consider only pure deviations. 
By symmetry, it suffices to consider only deviations by bidder $1$.  Finally, it suffices to consider only deviations in which bidder 1 bids either $0$, $a$, or $b$ on each item in $S_2$ and $0$ on all items in $S_1$; this is because bidder $1$ obtains value from either $S_1$ or $S_2$ but not both, and without deviating bidder $1$ obtains all of $S_1$ at no cost.

The following table includes all the possible bids (in the rows), and their respective expected values, expected payments and expected bids (in the columns).
For clarity of presentation, we present the expressions in parametric forms, and write the corresponding values for $a=0.06$ and $b=0.85$ in brackets.

\begin{center}
\begin{tabular}{|c|c|c|c|}
\hline
Deviation & $\Ex[\price \sim \dist]{\val(\win(\bid, \price))}$ & $\Ex[\price \sim \dist]{\price(\win(\bid, \price))}$ & $\Ex[\price \sim \dist]{\bid(\win(\bid, \price))}$\\
\hline
$(a,a,a)$ & $1$ & $2a [$0.12$]$ & $2a$ [$0.12$]\\
\hline
$(a,a,b)$ & $\third \cdot 2 + \twothirds \cdot 1 = \fourthirds$ & $\third(2a+b)+\twothirds(2a)=2a+\third b$ [$0.4033..$] & $\third(2a+b)+\twothirds(a+b)=\fourthirds a+b$ [$0.93$]\\
\hline
$(a,b,b)$ & $\twothirds \cdot 2 + \third \cdot 1 = \frac{5}{3}$ & $\twothirds(2a+b)+\third(2a)=2a+\twothirds b$ [$0.6866..$]& $\twothirds(a+2b)+\third(2b)=\twothirds a+ 2b$ [$1.74$]\\
\hline
$(b,b,b)$ & $2$ & $2a+b$ [$0.97$]& $3b$ [$2.55$]\\
\hline
$(a,0,0)$ & $\twothirds \cdot 1 + \third \cdot 0 = \twothirds$ & $\twothirds a$ [$0.04$]& $\twothirds a$ [$0.04$]\\
\hline
$(a,a,0)$ & $1$ & $\third(2a)+\twothirds a= \fourthirds a$ [$0.08$]& $\third(2a)+\twothirds a= \fourthirds a$ [$0.08$]\\
\hline
$(b,0,0)$ & $1$ & $\twothirds a+\third b$ [0.2833..]& $b$ [0.85]\\
\hline
\end{tabular}
\end{center}

It is evident from the table that for deviations $(a,b,b)$ and $(b,b,b)$, $\Ex{\val(\win(\bid, \price))} < \Ex{\bid(\win(\bid, \price))}$, and therefore they do not satisfy weakly no overbidding. 
For each of the remaining deviations, the obtained expected utility (which equals $\Ex{\val(\win(\bid, \price))} - \Ex{\price(\win(\bid, \price))}$) is smaller than the current expected utility (which equals $2a+b = 0.97$).
We conclude that the strategy profile in the example is a Bayesian equilibrium with weakly no overbidding bidders, as required.

\section{A Lower Bound for Correlated Valuations}
\label{sec:lb-correlated}
In this section we give a polynomial lower bound, $\Omega(n^{1/6})$, on the Bayesian price of anarchy for first-price auctions with subadditive valuations, when the valuation distributions are correlated among the bidders.  In fact, our example will hold even when all valuations are unit demand.  The construction is based heavily upon a lower bound due to \cite{BR11} for second-price auctions.

\begin{example}[ \textbf{High price of anarchy for correlated valuations and weakly no-overbidding players} ]
\label{ex:lb-correlated}
There are $n+(n+1)\sqrt{n}$ items and $3n$ players.  Players occur in triples.  Each triple contains one player of type $I$ and two players of type $II$.  A valuation from the correlated distribution $\dists$ is drawn as follows.  First, a set $T$ of $\sqrt{n}$ items are selected at random; we will refer to these items as the common pool.  Next, $n$ of the remaining items are selected at random and labelled $a_1, \dotsc, a_n$; we refer to these as the reserve items.  Finally, the remaining $n\sqrt{n}$ items are partitioned into sets $S_1, \dotsc, S_n$, each of size $\sqrt{n}$; we refer to these as the mock pools.  Reserve item $a_i$ and mock pool $S_i$ are matched with the $i$th triple of players.

Given the labelling of the items, the player valuations are as follows.  There are two possibilities for the valuation profile; an atypical case that occurs with probability $p = \frac{1}{n^{1/6}}$, and a typical case that occurs with the remaining probability $1-p$.  In the typical case, each player of type $II$ has value $n^{-1/6}$ for the corresponding reserve item $a_i$, and each player of type $I$ has value $1$ for any non-empty subset of the common pool plus the corresponding reserve item, $T \cup \{a_i\}$.  In the atypical case, each player of type $II$ has the zero valuation, and each player of type $I$, say from triple $i$, has value $1$ for any non-empty subset of the corresponding mock pool plus reserve item, $S_i \cup \{a_i\}$.  

First note that we can assume in a Bayes-Nash equilibrium that each player of type $II$ always bids $n^{-1/6}$ on his reserve item, in the typical case.  Bidding more than $n^{-1/6}$ leads to negative utility if he wins, and bidding less than $n^{-1/6}$ allows the other type $II$ bidder in the triple to obtain positive utility by winning the item with a bid less than $n^{-1/6}$.  Thus both agents of type $II$ in a triple will bid $n^{-1/6}$, causing both to have utility $0$.  In the atypical case, each type $II$ bidder trivially bids $0$ on all items.

A player of type $I$, when bidding in equilibrium, cannot distinguish between the typical and atypical cases; he always sees a set of $\sqrt{n} + 1$ items for which he has value, and each item is equally likely to be the reserve item.  Note that it has to bid at least $n^{-1/6}$ on the reserve item in order to win it in the typical case.  Suppose that the player bids at least $n^{-1/6}$ on some number $k$ of the $\sqrt{n}+1$ items.  Then if the valuation profile is atypical the player will win all $k$ items, and pay at least $k \cdot n^{-1/6}$.  The expected payment of the player is therefore at least $pkn^{-1/6} = kn^{-1/3}$.  If $k > n^{1/3}$ then the expected payment of the player is greater than $1$, and hence his expected utility is negative, contradicting the assumption of Bayes-Nash equilibrium.  We therefore conclude that $k \leq n^{1/3}$.  Each player of type $I$ will therefore win its reserve item in the typical case with probability at most $k / (\sqrt{n} + 1) < n^{-1/6}$.  

We conclude that the social welfare of any Bayes-Nash equilibrium $\strats$ satisfies
\[ \Ex[\vals,\bids\sim\strats(\vals)]{\sum_i \vali(\wini(\bids))} \leq p n + (1-p)(n \cdot n^{-1/6} + n \cdot n^{-1/6} \cdot 1 + \sqrt{n} \cdot 1) = O(n^{5/6}) \]
where the expression for the typical case includes bounds on the value obtained by the type $II$ bidders, the value of the type $I$ bidders who win reserve items, and the value of the type $I$ bidders who win items from the common pool, respectively.  Since the optimal social welfare is at least $n$ in each case, the price of anarchy is at least $\Omega(n^{1/6})$.
\end{example}

\section{No Overbidding: a Discussion}
\label{sec:no-overbidding-discussion}

In our analysis of the BPoA of second-price auctions we have adopted either the strong version or the weak version of the no-overbidding assumption.
A few conceptual remarks are in order.

We can think of no-overbidding assumptions as representing a form of risk aversion.  The strong no-overbidding assumption guarantees to the bidder a non-negative utility, independent of the behavior of the other players; i.e., even if the other players behave in an arbitrary way.
The weak no-overbidding assumption, in contrast, guarantees to the bidder a non-negative utility only if the other bidders behave ``as expected".
However, when the other bidders behave as expected, the bidder is guaranteed a non-negative utility even if the auction changes, ex-post, from a second-price auction to a first-price auction.

Let us give an example to illustrate the difference between the two assumptions.  Consider an instance of a simultaneous second-price auction with two bidders and two items, say $\{a,b\}$.  The first bidder is unit-demand; with probability $1$ his valuation is such that he has value $1$ for any non-empty subset of the items.  The second bidder's valuation is additive, and distributed as follows: with probability $1/2$ she values $a$ for $0.9$ and $b$ for $1.1$, and with the remaining probability $1/2$ she values $a$ for $1.1$ and $b$ for $0.9$.  In this instance, since the second bidder's valuation is additive it is a dominant strategy for her to bid her true value on each item.  The best response for the first bidder is then to bid between $0.9$ and $1$ on each item: this guarantees that he wins one of the items and pays $0.9$.  This profile of strategies then forms a BNE for this instance.  This bidding strategy of player $1$ does not satisfy the strong no-overbidding assumption: it requires that he indicate a value of at least $1.8$ for the set $\{a,b\}$, which is larger than his true value $1$.  However, it does satisfy the weak no-overbidding assumption given the behavior of bidder $2$, since bidder $1$ expects to win only one item (of value $1$) with a bid of $0.9$.

The above example illustrates a situation in which the best response of a player is permitted by weak no-overbidding but excluded by strong no-overbidding.  There also exist cases in which a best response is also excluded by the weak no-overbidding assumption.  \autoref{ex:wnob-poa} is one such case: the players can improve their utilities, but only by applying strategies that violate weak no-overbidding.  A direction for future research would be to determine whether there is a weaker restriction on strategies that never excludes best-responses, but yet still guarantees a constant price of anarchy bound.

The use of no-overbidding assumptions in Vickrey auctions and GSP auctions \citep{PT10,LP11} was justified by the fact that overbidding is weakly dominated: any overbidding strategy can be converted to a no-overbidding strategy that performs at least as well, regardless of the behavior of the other agents.  For the case of simultaneous item auctions,
our no-overbidding assumption cannot be relaxed to the assumption that bidders avoid such {\em dominated strategies}.
In particular, there exists an instance of a second-price auction with a Bayesian equilibrium in which all bidders play undominated strategies, and the Bayesian price of anarchy is $\Omega(n)$.
For example, consider an instance with $n$ unit-demand bidders and $n$ items, where every bidder $i=1, \ldots, n-1$ values each of item $i$ and item $n$ at $1-\epsilon$ (for some $\epsilon>0$), and bidder $n$ values all items $1, \ldots, n-1$ at $1$ (and has no value for item $n$).
One can easily verify that, for bidder $n$, to bid $1$ on all the first $n-1$ items is an undominated strategy (while it obviously breaks the strong no overbidding requirement).
Consider the strategy profile in which bidder $n$ bids according to this strategy, and each of bidders $i=1, \ldots, n-1$ bids $0$ on item $i$ and $1-\epsilon$ on item $n$.
This is a Bayesian equilibrium in undominated strategies in a second-price auction, which gives social welfare $2-\epsilon$, compared to the optimal social welfare, which is roughly $n$.

\section{Existence of Equilibria}
\label{app:existence}

The simultaneous auction games we consider have continuous type spaces (i.e.\ valuations) and continuous (pure) strategy spaces (i.e.\ potential bids).  In general, equilibria may not exist in such infinite games, even when the strategy space is compact.  As a toy example, consider a game in which each bidder declares a value from $[0,1]$, and whoever declares the largest value strictly less than $1$ wins; such a game does not admit any mixed equilibria.  The existence of equilibria in infinite games is an involved topic, a full discussion of which falls outside the scope of this paper; we hope only to give a brief discussion of relevant issues and results.  

Consider first a variant of our auction game in which agent types and bids are discretized and bounded.  That is, suppose that all values lie in $[0,1]$, and moreover that there is some $\epsilon > 0$ such that for each agent $i$ and every set of items $S$, $\vali(S)$ can be expressed as $\epsilon \times k_i(S)$ for some integer $k_i(S) \geq 0$.  Furthermore, each agent is restricted to placing bids from $[0,1]$, each of which must be a multiple of $\epsilon$.  In this restricted game, a Bayes-Nash equilibrium always exists.  To see this, note that the set of (pure) strategies is finite: it is the set of all functions mapping agent types (a finite set) to bid vectors (also finite).  We can interpret the (Bayesian) game of incomplete information as the following normal-form game: each agent selects a bidding function ex ante, and the payoffs correspond to the expected payoffs in the Bayesian game under the commonly known distribution of types.  Since the strategy space is finite, Nash's result implies the existence of a mixed Nash equilibrium of this normal-form game, which corresponds precisely to a Bayes-Nash equilibrium of the original game.

Let us turn now to the more general setting of continuous agent valuations and bids, say constrained to lie in $[0,1]$.  We can, of course, approximate the continuous setting via discretizations to an $\epsilon$-grid with arbitrarily small choice of $\epsilon$.  It is tempting to take the limit $\epsilon \to 0$, but the existence of an equilibrium in each approximate games does not necessarily imply the existence of an equilibrium for the limit case, even in settings of complete information.  Consider, for example, the sale of a single object via first-price auction between two bidders, where the first bidder has value $1/2$ and the second bidder has value $1$, in which ties are broken in favor of bidder 1.  The natural equilibrium in this case is for bidder 1 to bid $1/2$ and bidder 2 to bid slightly higher, but no fixed bid of bidder 2 is optimal: any bid of the form $1/2 + \delta$ is strictly worse than $1/2 + \delta/2$ for any $\delta > 0$.  Note that Nash's theorem does not imply the existence of equilibria in this case because utilities are discontinuous in the strategy space: the utility of a player bidding on a single item is discontinuous at the bid value of the highest-bidding competitor.

The issue in the above example lies in the choice of tie-breaking rule.  If ties were broken in favor of bidder 2, there would indeed exist an equilibrium in which each bidder bids $1/2$.  It turns out that this is not coincidental: for the specific case of mixed Nash equilibria in games of complete information, a result due to \cite{SZ90} implies that non-existence of equilibrium is always due to the choice of tie-breaking rule.  Applied in the context of simultaneous item auctions, their result implies that for any profile of agent types, there exists a tie-breaking rule (i.e.\ manner of distributing items for which multiple players declare the same bid) such that a mixed Nash equilibrium exists.  Importantly, the tie-breaking rule used may depend on the types of the agents.

For settings of incomplete information the situation is less clear.  The work of \cite{SZ90} has been extended to Bayes-Nash equilibria under some restrictions on agent utilities, such as in \cite{JSSZ02}.  They demonstrate that in certain auction settings, one can incentivize agents to reveal their types for the purpose of implementing a tie-breaking rule that guarantees the existence of an equilibrium.  However, their approach relies crucially on bidder utilities being affine functions of the auction outcome, which is not necessarily the case for combinatorial auctions with non-additive agent valuations.

To the best of our understanding, for the particular case of simultaneous item auctions for agents with subadditive valuations, prior work does not imply a manner of selecting tie-breaking rules so that BNE always exist.  We therefore view our results as pertaining most directly to discrete approximations of the auction with continuous agent types, and leave for future work the task of establishing (or disproving) BNE existence in general.

\end{document}